\def\calh{{\mathcal H}}
\def\C{\mathbb C}
\def\norm#1{\left\Vert #1\right\Vert }
\def\Tr{\mathop {\rm Tr}}
\def\anti#1#2{\left\{#1,#2\right\}}
\def\comm#1#2{\left[#1,#2\right]}
\def\be{\begin{equation}}
\def\beq{\begin{eqnarray}}
\def\ee{\end{equation}}
\def\eeq{\end{eqnarray}}
\def\eqref#1{(\ref{#1})}
\def\lra#1{\langle #1 \rangle}
\def\lrp#1{\left( #1 \right)}
\def\rp{\text{RP}}
\def\abs#1{\left\vert #1\right\vert}
\def\B{\mathcal{B}}
\def\A{\mathfrak{A}}
\def\calh{\mathcal{H}}
\def\rp{\text{RP}}
\def\even{{\rm even}}
\newtheorem{thm}{Theorem}
\newtheorem{cor}[thm]{Corollary}
\newtheorem{lem}[thm]{Lemma}
\newtheorem{prop}[thm]{Proposition}
\begin{document}

\title{Reflection Positivity for Majoranas}

\author[Arthur Jaffe]{Arthur Jaffe}
\address{Harvard University, Cambridge, Massachusetts 02138, USA\\ Department of Physics, University of Basel, Basel, Switzerland\\
Institute for Theoretical Physics, ETH Z\"urich, Z\"urich Switzerland}
\email{Arthur{\_}Jaffe@harvard.edu}

\author[Fabio L. Pedrocchi]{Fabio L. Pedrocchi}
\address{Department of Physics, University of Basel, Basel, Switzerland}
\email{fabio.pedrocchi@unibas.ch}

\maketitle

\begin{abstract}
We establish reflection positivity for Gibbs trace states defined by a certain class of Hamiltonians that describe the interaction of Majoranas on a lattice. These Hamiltonians may include many-body interactions, as long as the signs of the associated coupling constants satisfy certain restrictions. We show that reflection positivity holds on an even subalgebra of Majoranas. 
\end{abstract}

\setcounter{equation}{0}  \section{Introduction}
In this paper we prove reflection positivity for trace functionals defined by a certain class of interactions of (neutral) Majoranas on a lattice.  Earlier results on reflection positivity for fermions in the framework of quantum statistical mechanics focus on the case of  charged excitations.  In  \S\ref{sect:Hamiltonians} we isolate conditions  that entail reflection positivity on an interaction Hamiltonian $H$, expressed in terms of Majoranas. Our main result is Theorem~\ref{prop:reflection_positivity} of~\S\ref{sect:MainResult},
\begin{equation}
0\leqslant\Tr(A\,\vartheta(A)\,e^{-H})\,,
\end{equation}
which is valid for certain functions $A$ of Majoranas, and for a reflection $\vartheta$.
Some related bounds are given in \S\ref{sect:Reflection_Bounds}.

Our formulation and  proof of Theorem \ref{prop:reflection_positivity} in \S\ref{sect:MainResult} involve familiar methods, but they also require new ideas. As the present paper describes interactions without charge,  one does not have the useful charge-conservation symmetry to aid in their analysis. In this case we establish reflection positivity on an even sub-algebra of fermions. The corresponding positivity is not valid on the full fermionic algebra for a half-space on one side of the reflection plane, as we show with an explicit counterexample in~\eqref{eq:Counterexample}.

Recently the present authors have studied certain  quantum spin interactions, which are of interest in quantum information theory \cite{CJLP}, where we apply the reflection positivity results of this present paper. These quantum spin systems have certain features similar to lattice gauge theory. However one must also deal with the additional complication that the basic fermionic variables anti-commute at different sites, rather than commute.   

Reflection positivity has played an important role in analysis of quantum fields as well as the analysis of classical and quantum spin systems. 
Osterwalder and Schrader  discovered reflection positivity in their study of classical fields on Euclidean space \cite{OS1}; it provided the key notion of quantization and allowed one to go from a classical field to a  quantum-mechanical Hilbert space and a positive Hamiltonian acting on that Hilbert space. 

Multiple reflection bounds, based on reflection positivity for classical fields, played a crucial role in Glimm, Jaffe, and Spencer's mathematical proof  \cite{GJS} of the physicists' assumption that phase transitions and symmetry breaking exist in quantum field theory. This first example of a phase transition in field theory \cite{GJS} concerned breaking of a discrete $\mathbb{Z}_{2}$ symmetry. 
Reflection positivity also turned out be be extremely useful in the analysis of lattice models for boson and fermion interactions by Fr\"ohlich, Simon, Spencer, Dyson, Israel, Lieb, Macris, Nachtergale, and others \cite{FSS,DLS,FILS,Lieb1994,Nachtergaele1996}.  This included the analysis of phase transitions and the breaking of certain continuous symmetry groups in lattice spin systems. In addition, reflection positivity was crucial in the study by Osterwalder and Seiler  of  the Wilson action for lattice gauge theory \cite{OS3}. 

\setcounter{equation}{0}  \section{Definitions and Basic Properties}
Majoranas on a lattice are a self-adjoint representation of a Clifford algebra with $2N$ generators $c_{i}$. They satisfy
\be
\anti{c_i}{c_j}=2\delta_{ij}\,,\qquad c_{i}^{*}=c_{i}\,,\qquad \text{for } i,j=1,\ldots,2N\,.
\ee  
One can realize $2N$ Majoranas in a standard way on a complex Hilbert space of dimension $2^{N}$, and we use this representation.  Start with the real Hilbert space  $\mathcal{H}_{r}=\wedge \mathbb{R}^{N}$,  the real exterior algebra over  $\mathbb{R}^{N}$. Let $a_{j}^{*}$ denote the linear transformation  on $\mathcal{H}_{r}$ given  the exterior product  $e_{j}\wedge$ with the $j^{\rm th}$ basis element $e_{j}$ in $\mathbb{R}^{N}$. These operators and their adjoint $a_{j}$ are $N$ fermionic creation and annihilation operators.  Let $\calh$ denote the complexification of $\mathcal{H}_{r}$ and define   the Majorana operators $c_{2j-1}, c_{2j}$ 
as linear combinations,  $c_{2j-1}=a_{j}+a_{j}^{*}$ and $c_{2j}=i\lrp{a_{j}-a_{j}^{*}}$. Thus our odd indexed Majoranas are real and the even Majoranas are purely imaginary.

We consider the index $j$ of the Majoranas to have a geometric significance as an element of a simple cubic lattice $\Lambda=\Lambda_{-}\cup\Lambda_{+}$. We assume that $\Lambda$ is invariant under a reflection $\vartheta$ in a plane $\Pi$ normal to a coordinate direction and intersecting no sites in $\Lambda$, so $\vartheta(\Lambda)=\Lambda$. Here $\Lambda_{\pm}$ denote the sites on the $\pm$ side of $\Pi$. We assume that the reflection $\vartheta$ maps $\Lambda_{\pm}$ into $\Lambda_{\mp}$. 

For any subset $\B\subset\Lambda$, let  $\A(\mathcal B)$ denote the algebra generated by the $c_{j}$'s with $j\in\B$.  Let $\A=\A(\Lambda)$ and $\A_{\pm}=\A(\Lambda_{\pm})$.  Also introduce the even algebras $\A(\B)^{\even}$, as the subset of $\A(\B)$ generated by even monomials in the $c_{j}$'s, with $j\in\B$. Note that $\A^{\even}$ is not abelian, but $\A^{\even}(\B)$ commutes with $\A^{\even}(\mathcal{B'})$ when $\B\cap\mathcal{B'}=\varnothing$.

\subsection{Anti-Unitary Transformations}
An antilinear transformation $\Theta$ on the finite-dimensional complex Hilbert space $\mathcal{H}$ has the property $\Theta(f+\lambda g)=\Theta f + \bar{\lambda} \Theta g$ for $f,g\in\mathcal{H}$ and $\lambda\in\mathbb{C}$. Here $\bar\lambda$ denotes the complex conjugate of $\lambda$. Assuming $\mathcal{H}$ has the hermitian inner product $\lra{\ \cdot \, , \cdot \ }$, the adjoint $\Theta^{*}$ of $\Theta$  is the anti-linear transformation
\be
\lra{f,\Theta^{*}g}=\lra{g,\Theta f}\,.
\ee
Also $\Theta$ is said to be anti-unitary if for all $f,g\in\mathcal{H}$,
\be
\lra{f,g}=\lra{\Theta g,\Theta f}=\lra{\Theta^{*}g,\Theta^{*}f}\,.
\ee
As a consequence an anti-unitary satisfies $\Theta\Theta^{*}=\Theta^{*}\Theta=I$ or $\Theta^{*}=\Theta^{-1}$. 

We are especially interested in an anti-unitary representation of the reflection $\vartheta$ on $\mathcal{H}$, which we also denote by $\vartheta$.  The anti-unitary $\vartheta$ defines an anti-linear map on $\A$, with $\vartheta:\A_{\pm}\rightarrow\A_{\mp}$ 
with the property 
\be
\vartheta(c_{j})
=\vartheta c_{j} \vartheta^{-1}
=c_{\vartheta j}\;.
\ee
By the general properties of the anti-unitary $\vartheta$, 
	\be
		\vartheta(AB)=\vartheta(A)\,\vartheta(B)\,,
		\qquad\text{and}\quad
		\vartheta(A)^{*} = \vartheta(A^{*})\;.
	\label{eq:Anti-homomorphism Theta}
	\ee
In addition
	\be
		\Tr(\vartheta(A)) = \overline{\Tr(A)}\;,
		\qquad\text{for all }
		A\in \mathfrak{A}\;.
	\label{eq:Reflected Trace}
	\ee
Thus  the Clifford algebra relations are also satisfied by $\vartheta(c_{j})$,  
	\be
		\anti{\vartheta(c_{i})}{\vartheta(c_{j})}=2\delta_{ij}I\;.
	\ee
It is no complication to allow a set of $n$ Majorana operators at each lattice site~$i$.   
 
\setcounter{equation}{0}  \setcounter{equation}{0}  \section{Hamiltonians}\label{sect:Hamiltonians}
We consider self-adjoint Hamiltonians of the form
	\be	
	\label{eq:Hamiltonian}
		H=H_{-}+H_{0}+H_{+}\,,
	\ee
where $H_{-}=H_{-}^{*}\in\mathfrak{A}_{-}^{\even}$ and $H_{+}=H_{+}^{*}\in\mathfrak{A}_{+}^{\even}$. The operator $H_{0}=H_{0}^{*}$ denotes a coupling across the reflection plane $\Pi$. Let $\mathfrak{I}=\{i_{1},\ldots,i_{k}\}$ denote a subset of points in $\Lambda_{-}$ with cardinality $n(\mathfrak{I})=\abs{\mathfrak{I}}$. Define
	\be
	\label{eq:sigma defn}
		\sigma(\mathfrak{I})=n(\mathfrak{I})\mod2\,.
	\ee
We assume that $H_{0}$ has the form   
	\be
	\label{eq:Hamiltonian_interaction}
		H_{0}
		=\sum_{\mathfrak{I}}J_{\mathfrak{I}\,\vartheta \mathfrak{I}}\
			 i^{\sigma(\mathfrak{I})}\,C_{\mathfrak{I}}\,\vartheta(C_{\mathfrak{I}})\,,
			 \qquad\text{where}\quad
			 J_{\mathfrak{I}\,\vartheta \mathfrak{I}}\in\mathbb{R}\;, 
	\ee
and $C_{\mathfrak{I}}=c_{i_{1} }  c_{i_{2}} \cdots c_{i_{k}}\in\mathfrak{A}_{-}$.

\smallskip
\noindent\textbf{Remark:}
The Hamiltonian $H_{0}$ is self-adjoint and reflection-symmetric, 
	\be
		H_{0}
		=H_{0}^{*}=\vartheta(H_{0})\,.
	\ee
Each term in the sum  \eqref{eq:Hamiltonian_interaction}  defining $H_0$ is self-adjoint. In fact 
	\be
		\lrp{C_{\mathfrak{I}}\,\vartheta(C_{\mathfrak{I}})}^{*}
		=\vartheta(C_{\mathfrak{I}})^{*}\,C_{\mathfrak{I}}^{*}
		=(-1)^{\abs{\mathfrak{I}}}\,C_{\mathfrak{I}}\,
			\vartheta(C_{\mathfrak{I}})\,.
	\ee
So from  $\overline{i^{\sigma(\mathfrak{I})}}=(-1)^{\sigma(\mathfrak{I})}\,
i^{\sigma(\mathfrak{I})}$, and $(-1)^{\sigma(\mathfrak{I})}=(-1)^{\abs{\mathfrak{I}}}$, we infer
	\be
	\lrp{i^{\sigma(\mathfrak{I})}\,C_{\mathfrak{I}}\,\vartheta(C_{\mathfrak{I}})}^{*}
	=i^{\sigma(\mathfrak{I})}\,C_{\mathfrak{I}}\,\vartheta(C_{\mathfrak{I}})\;.
	\ee	
Likewise 
	\be
		\vartheta(H_{0})
		=\sum_{\mathfrak{I}}(-1)^{\abs{\mathfrak{I}}}\,i^{\sigma(\mathfrak{I})}\,\vartheta(C_{\mathfrak{I}})\,C_{\mathfrak{I}}=\sum_{\mathfrak{I}}i^{\sigma(\mathfrak{I})}\,C_{\mathfrak{I}}\,\vartheta(C_{\mathfrak{I}})\,.
	\ee
Here we use the fact that the $\abs{\mathfrak{I}}$ Majoranas in $C_{\mathfrak{I}}$ all anti-commute with the ones in $\vartheta(C_{\mathfrak{I}})$, yielding another factor $(-1)^{\abs{\mathfrak{I}}}$ in the final equality.

\smallskip
\noindent\textbf{Assumptions on the Couplings:} We require that the sign of the couplings $J_{\mathfrak{I}\,\vartheta \mathfrak{I}}$ in \eqref{eq:Hamiltonian_interaction} satisfy 
	\be	\label{eq:CouplingRestriction}
	\begin{array}{lll}
			&\text{all } J_{\mathfrak{I}\,\vartheta \mathfrak{I}}\leqslant0\;,
			\,\text{or all }J_{\mathfrak{I}\,\vartheta \mathfrak{I}}\geqslant0\;,
			&\text{for terms with } \sigma(\mathfrak{I}) =1\,,\\
			&\text{all } J_{\mathfrak{I}\,\vartheta \mathfrak{I}} 	\leqslant0\,,
			&\text{for terms with } \sigma(\mathfrak{I}) =0\,.
	\end{array}
	\ee

We restrict the sign of couplings only for interaction terms \eqref{eq:Hamiltonian_interaction} that cross the plane $\Pi$. Nearest-neighbor two-body interactions have $\sigma(\mathfrak{I})=1$.

\setcounter{equation}{0}  \section{Monomial Basis}
The $2N$ operators $c_{i}$ yield monomials of the form $M_{\beta}=c_{i_{1}}c_{i_{2}}\cdots c_{i_{j}}$ of degree $j$, with $i_{1}<i_{2}<\cdots i_{j}$.  (Other orders of the $c$'s are the same up to  a $\pm$ sign.) Denote by $\beta=0$ the monomial $M_{0}=I$.    There are $2N\choose j$ such monomials $M_{\beta}$ of degree $j$, so there are a total of $2^{2N}$ such monomials. As $2^{2N}=\lrp{\dim \mathcal{H}}^{2}$, these monomials are a candidate for a basis of the space of matrices acting on $\mathcal{H}$. 

\begin{prop}\label{prop:monomials}
If $\beta\neq0$, the monomials $M_{\beta}$ have vanishing trace, $\Tr\lrp{M_{\beta}}=0$. Any linear transformation $A$ on $\mathcal{H}$ can be written in terms of the basis $M_{\beta}$ as
\be\label{eq:A_expansion}
A=\sum_{\beta}a_{\beta}\,M_{\beta}\,,\quad\text{where}\quad a_{\beta}=2^{-N}\Tr\left(M_{\beta}^{*}A\right)\,.
\ee
The monomials $M_{\beta}$ are an irreducible set of matrices. 
\end{prop}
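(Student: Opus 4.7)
My plan is to attack the three assertions in order, using the trace and anticommutation relations of the Majoranas throughout.

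For the vanishing-trace claim, I would find, for each $\beta\neq0$, a generator $c_{k}$ that anticommutes with $M_\beta$; then cyclicity of the trace gives $\Tr(M_\beta)=\Tr(c_{k}M_{\beta}c_{k}^{-1})=-\Tr(M_{\beta})$, hence $\Tr(M_\beta)=0$. Write $M_\beta = c_{i_1}c_{i_2}\cdots c_{i_j}$ with $1\le j\le 2N$. If $j$ is odd, choose any $k\notin\{i_{1},\ldots,i_{j}\}$ (possible since $j<2N$); then $c_{k}$ anticommutes with each $c_{i_{\ell}}$, and one picks up $j$ signs, giving $c_{k}M_{\beta}c_{k}=(-1)^{j}M_{\beta}=-M_{\beta}$. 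If $j$ is even (and $\ge 2$), choose $k=i_{1}$; a short count shows $c_{i_{1}}M_{\beta}c_{i_{1}}=(-1)^{j-1}M_{\beta}=-M_{\beta}$. Either way we get the sign flip we need.

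For the expansion formula, I would show that the normalized monomials $\{2^{-N/2}M_{\beta}\}$ form an orthonormal basis of the matrix algebra $\mathrm{End}(\mathcal{H})$ with respect to the Hilbert--Schmidt inner product $\lra{X,Y}=2^{-N}\Tr(X^{*}Y)$. The key observation is that $M_{\alpha}^{*}M_{\beta}=\pm M_{\gamma}$, where $\gamma$ indexes the symmetric difference of the index sets of $\alpha$ and $\beta$: one reverses $M_\alpha$ (using $c_{i}^{*}=c_{i}$), concatenates with $M_\beta$, cancels repeated generators via $c_{i}^{2}=I$, and sorts the remaining ones using the anticommutation relations, each step introducing only a sign. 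If $\alpha=\beta$ this product equals $I$, whose trace is $2^{N}=\dim\mathcal{H}$; if $\alpha\neq\beta$, the reduced monomial $M_{\gamma}$ is a nonidentity monomial and its trace vanishes by the previous step. Orthonormality combined with the count $|\{\beta\}|=\sum_{j=0}^{2N}\binom{2N}{j}=2^{2N}=(\dim\mathcal{H})^{2}$ shows that the $M_{\beta}$ are a basis, and the coefficient formula $a_{\beta}=\lra{M_{\beta},A}=2^{-N}\Tr(M_{\beta}^{*}A)$ follows.

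Irreducibility is then automatic: since the linear span of $\{M_{\beta}\}$ is all of $\mathrm{End}(\mathcal{H})$, any subspace invariant under every $M_{\beta}$ is invariant under every linear transformation on $\mathcal{H}$, and hence is $\{0\}$ or $\mathcal{H}$. Equivalently, by Schur's lemma, any matrix commuting with every $M_{\beta}$ commutes with every operator and must be a scalar.

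I expect the only mildly delicate step to be bookkeeping the signs in the product $M_{\alpha}^{*}M_{\beta}$ to identify it unambiguously with $\pm M_{\gamma}$ for $\gamma$ given by the symmetric difference; once that reduction is in hand, everything else is an application of Step~1 and a dimension count.
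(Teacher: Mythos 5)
Your proposal is correct and follows essentially the same route as the paper: the trace of a nontrivial monomial is killed by inserting $c_k^2=I$ and cycling (with the same odd/even case split), linear independence comes from $M_{\alpha}^{*}M_{\beta}=\pm M_{\gamma}$ with $\gamma\neq 0$ unless $\alpha=\beta$ together with the $2^{2N}$ dimension count, and irreducibility follows because the span is all of $\mathrm{End}(\mathcal{H})$. Phrasing the independence step as Hilbert--Schmidt orthonormality is only a cosmetic repackaging of the paper's trace computation.
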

\begin{proof}
If $\deg M_{\beta}$ is odd, there is at least one of the $c$'s, say $c_{j}$, not contained in $M_{\beta}$. Thus 
	\be
		\Tr\lrp{M_{\beta}}
		= \Tr\lrp{c_{j} c_{j}M_{\beta}}
		= \Tr\lrp{ c_{j}M_{\beta}c_{j}}
		= (-1)^{\deg M_{\beta}}\Tr\lrp{M_{\beta}}
		= - \Tr\lrp{M_{\beta}}
		=0\;.\nonumber
	\ee
On the other hand, if $\deg M_{\beta}=2k>0$, and $c_{j}$ does occur in $M_{\beta}$, then also 
	\be
		\Tr\lrp{M_{\beta}}
		= \Tr\lrp{c_{j}^{2}M_{\beta}}
		= \Tr\lrp{ c_{j}M_{\beta}c_{j}}
		= (-1)^{\deg M_{\beta}-1}\Tr\lrp{M_{\beta}}
		= - \Tr\lrp{M_{\beta}}
		=0\;.\nonumber
	\ee
Thus we have verified the first statement in the proposition. Also $M_{\beta}^{*}M_{\beta}=I$, and $M_{\beta'}^{*}M_{\beta}=\pm M_{\gamma}$ for some $\gamma\neq0$.

Suppose that there are coefficients $a_{\beta}\in\mathbb{C}$ such that $\sum_{\beta} a_{\beta}M_{\beta}=0$.  Then for any $\beta'$, one has ${M_{\beta'}^{*} \sum_{\beta} a_{\beta}M_{\beta}}= \sum_{\beta} a_{\beta} {M_{\beta'}^{*}M_{\beta}}=0$. Taking the trace shows that $a_{\beta'}=0$, so the $M_{\beta}$ are actually linear independent. As there are $2^{2N}$ matrices $M_{\beta}$, they are a basis for all matrices on $\mathcal{H}$.

Expanding an arbitrary matrix $A$ in this basis, we calculate the coefficients in \eqref{eq:A_expansion} using $\Tr I=2^{N}$. As the set of all matrices on $\mathcal{H}$ is irreducible, the basis $M_{\beta}$ is also irreducible. 
\end{proof}
\goodbreak

\setcounter{equation}{0}  \section{Reflection Positivity}\label{sec:RP}
In this section we consider traces on the Hilbert space $\calh = \wedge \C^{N}$.  

\begin{prop}[\bf Reflection Positivity I]\label{prop:positivity}
Consider an operator $A\in\mathfrak{A}_{\pm}$, then
\be
{\Tr}(A\,\vartheta(A))\geqslant0\,.
\ee
\end{prop}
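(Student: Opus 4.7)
The plan is to reduce the trace to a sum over a monomial basis and exploit the vanishing-trace property established in Proposition~\ref{prop:monomials}. Take $A\in\mathfrak{A}_-$ without loss of generality. First I would observe that the argument used to prove Proposition~\ref{prop:monomials}, applied to the sub-Clifford algebra generated by $\{c_i : i\in\Lambda_-\}$, shows that the ordered monomials $M_\beta$ with support in $\Lambda_-$ form a linear basis of $\mathfrak{A}_-$. Thus I can write
\[
A=\sum_{\beta} a_\beta\, M_\beta, \qquad a_\beta\in\mathbb{C}.
\]

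Next, since $\vartheta$ is antilinear and satisfies~\eqref{eq:Anti-homomorphism Theta}, one has $\vartheta(A)=\sum_\beta \bar a_\beta\,\vartheta(M_\beta)$, and each $\vartheta(M_\beta)$ is (up to an inessential sign from reordering indices) a monomial in the Majoranas $\{c_j : j\in\Lambda_+\}$. Linearity of the trace then gives
\[
\Tr\bigl(A\,\vartheta(A)\bigr) = \sum_{\beta,\beta'} a_\beta\,\bar a_{\beta'}\,\Tr\bigl(M_\beta\,\vartheta(M_{\beta'})\bigr).
\]

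The crucial observation is that $\Lambda_-\cap\Lambda_+=\varnothing$, so the Majoranas appearing in $M_\beta$ are disjoint from those appearing in $\vartheta(M_{\beta'})$. After reordering via the Clifford relations, the product $M_\beta\,\vartheta(M_{\beta'})$ is therefore $\pm M_\gamma$ for some global monomial $M_\gamma$ whose degree equals the sum of the degrees of $M_\beta$ and $\vartheta(M_{\beta'})$. This total degree can vanish only if $\beta=\beta'=0$, in which case the product is the identity; otherwise Proposition~\ref{prop:monomials} forces the trace to vanish. Only the diagonal term $\beta=\beta'=0$ survives, yielding $\Tr(A\,\vartheta(A)) = 2^N |a_0|^2 \geqslant 0$. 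The argument is essentially bookkeeping; its whole point is that the spatial disjointness of $\Lambda_\pm$ rules out any Majorana cancellation between $M_\beta$ and $\vartheta(M_{\beta'})$, so no genuine obstacle arises.
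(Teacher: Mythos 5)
Your proposal is correct and follows essentially the same route as the paper: expand $A$ in the monomial basis of Proposition~\ref{prop:monomials} restricted to $\mathfrak{A}_{-}$, note that the disjointness of $\Lambda_{-}$ and $\Lambda_{+}$ prevents any cancellation in the products $M_{\beta}\,\vartheta(M_{\beta'})$, and conclude that only the $\beta=\beta'=0$ term contributes, giving $\Tr(A\,\vartheta(A))=2^{N}\abs{a_{0}}^{2}\geqslant0$. No substantive difference from the paper's proof.
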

\begin{proof}
The operator $A\in\mathfrak{A}_{\pm}$ can be expanded as a polynomial in the basis $M_{\beta}$ of Proposition \ref{prop:monomials}. The monomials that appear in the expansion all belong to $\mathfrak{A}_{\pm}$.   Write
\be
A=\sum_{\beta}a_{\beta}\,M_{\beta}\;,\qquad
\text{and}\quad
\vartheta(A)=\sum_{\beta} \overline{a_{\beta}}\,\vartheta(M_{\beta})\;.
\ee
We now consider the case $A\in\mathfrak{A}_{-}$.
For $M_{ \beta}=c_{i_{1}}\cdots c_{i_{k}}$, define
$
			M_{\vartheta\beta}
			=c_{\vartheta i_{1}}\cdots c_{\vartheta i_{k}}\,.$
 One then has 
	\be
		\Tr\lrp{A\,\vartheta(A)}
		=
		\sum_{\beta,\beta^{'}}
		a_{\beta} \,\overline {a_{\beta'}}\, \Tr\lrp{M_{\beta}\,\vartheta(M_{\beta'})}
		=\sum_{\beta,\beta^{'}}
		a_{\beta} \,\overline {a_{\beta'}}\, \Tr\lrp{M_{\beta} \, M_{\vartheta \beta'}}\,.
	\ee
Since $M_{\beta}\in\mathfrak{A}_{-}$ and $M_{\vartheta\beta'}\in\mathfrak{A}_{+}$, they are products of different Majoranas. We infer from Proposition \ref{prop:monomials} that the trace vanishes unless $\beta=\vartheta\beta'=0$. We have,  
\be
\text{Tr}\left(A\,\vartheta(A)\right)
=
2^{N}
\left\vert a_{0}\right\vert^2\geqslant0\,,
\ee
as claimed.
\end{proof}

This reflection positivity allows one to define a pre-inner product on $\mathfrak{A}_{\pm}$ given by
\be\label{eq:inner_product_1}
\lra{A,B}_{\rp}=\Tr(A\,\vartheta(B))\,.
\ee
This pre-inner product satisfies the Schwarz inequality
\be\label{eq:Schwarz}
\abs{\lra{A,B}_{\rp}}^2\leqslant \lra{A,A}_{\rp}\,\lra{B,B}_{\rp}\,.
\ee
In the standard way, one obtains an inner product $\lra{\widehat{A},\widehat{B}}_{\rp}$ and norm $\Vert\widehat{A}\Vert_{\rp}$ by defining the inner product on equivalence classes $\widehat{A}=\{A+n\}$ of $A$'s, modulo elements $n$ of the null space of the functional \eqref{eq:inner_product_1} on the diagonal.   In order to simplify notation, we ignore this distinction.

\section{The Main Result\label{sect:MainResult}}
Here we consider reflection-positivity of the functional 
\be\label{eq:rp functional}
\Tr(A\,\vartheta(B)\,e^{-H})\,,
\qquad\text{for}\quad
A,B\in \mathfrak{A}_{\pm}^{\even}\;,
\ee
that is linear in $A$ and anti-linear in $B$.  

\begin{thm}[\bf Reflection Positivity II]\label{prop:reflection_positivity}
Consider $A\in\mathfrak{A}_{\pm}^{\even}$ and $H$ of the form \eqref{eq:Hamiltonian}, with $H_{+}=\vartheta(H_{-})$. Then the functional \eqref{eq:rp functional} is positive on the diagonal,
\be\label{eq:rp}
0\leqslant\Tr(A\,\vartheta(A)\,e^{-H})\,.
\ee
\end{thm}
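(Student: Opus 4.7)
My plan is to apply Duhamel's formula to expand $e^{-H}$ in powers of the cross-term $H_0$, and to reduce each term in the resulting series to an expression of the form $\Tr(L\,\vartheta(L))$ with $L\in\mathfrak{A}_-$, which is non-negative by Proposition~\ref{prop:positivity}. Set $\tilde H:=H_-+H_+$; since $H_\pm\in\mathfrak{A}_\pm^{\even}$ are even and supported on disjoint sides of $\Pi$, they commute, so $e^{-\tilde H}=e^{-H_-}e^{-H_+}$, and Duhamel gives
\[
e^{-H}=e^{-\tilde H}\sum_{k\geq0}(-1)^k\int_{0\leq s_1\leq\cdots\leq s_k\leq1} B(s_k)\cdots B(s_1)\,ds_1\cdots ds_k,
\]
where $B(s):=e^{s\tilde H}H_0e^{-s\tilde H}$. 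Because $H_+$ is even and acts on the opposite side of $\Pi$, it commutes with every $C_{\mathfrak{I}}\in\mathfrak{A}_-$, and the form of the interaction is preserved: $B(s)=\sum_{\mathfrak{I}}J_{\mathfrak{I}\,\vartheta\mathfrak{I}}\,i^{\sigma(\mathfrak{I})}C_{\mathfrak{I}}(s)\,\vartheta(C_{\mathfrak{I}}(s))$, with $C_{\mathfrak{I}}(s):=e^{sH_-}C_{\mathfrak{I}}e^{-sH_-}\in\mathfrak{A}_-$.

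After substitution, each term is labelled by a tuple $(\mathfrak{I}_1,\ldots,\mathfrak{I}_k)$ and contains the product $P_kQ_k\cdots P_1Q_1$, with $P_j:=C_{\mathfrak{I}_j}(s_j)\in\mathfrak{A}_-$ and $Q_j:=\vartheta(P_j)\in\mathfrak{A}_+$. The key combinatorial step is to push every $P$ to the left and every $Q$ to the right. Using that even elements from opposite sides of $\Pi$ commute while two odd elements anti-commute, each swap of $Q_b$ past $P_a$ (with $a<b$) contributes a factor $(-1)^{\sigma(\mathfrak{I}_a)\sigma(\mathfrak{I}_b)}$, for a cumulative sign $(-1)^{m(m-1)/2}$, where $m$ counts the indices with $\sigma(\mathfrak{I}_j)=1$. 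Combined with the explicit $i^m$ coming from the $i^{\sigma(\mathfrak{I}_j)}$ factors in $H_0$, this produces the clean identity $i^m(-1)^{m(m-1)/2}=i^{m^2}$, and the rearranged product is $i^{m^2}K\,\vartheta(K)$ where $K:=P_k\cdots P_1\in\mathfrak{A}_-$ has parity~$m$.

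I then use commutativity of $\mathfrak{A}_\pm^{\even}$ with all of $\mathfrak{A}_\mp$ to pull $A\,\vartheta(A)$ and $e^{-\tilde H}=e^{-H_-}e^{-H_+}$ past $K\,\vartheta(K)$, arriving at
\[
\Tr\bigl(A\,\vartheta(A)\,e^{-\tilde H}\,K\,\vartheta(K)\bigr)=\Tr\bigl(L\,\vartheta(L)\bigr),\qquad L:=A\,e^{-H_-}K\in\mathfrak{A}_-.
\]
Proposition~\ref{prop:positivity} then yields $\Tr(L\,\vartheta(L))=2^N|a_0|^2\geq0$, where $a_0$ is the identity component of $L$ in the monomial basis; crucially, this vanishes whenever $L$ has no identity part, and in particular whenever $L$ is odd, which is precisely the case when $m$ is odd (since $A$ and $e^{-H_-}$ are even).

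Finally I will collect signs: the coefficient multiplying $\Tr(L\,\vartheta(L))$ is $(-1)^k\prod_jJ_{\mathfrak{I}_j\,\vartheta\mathfrak{I}_j}\cdot i^{m^2}$, which by the restriction~\eqref{eq:CouplingRestriction} simplifies to $\pm\,i^{m^2}(-1)^m\prod_j|J_{\mathfrak{I}_j\,\vartheta\mathfrak{I}_j}|$. This is real and non-negative when $m$ is even, and purely imaginary when $m$ is odd---but in the latter case it multiplies a trace that vanishes. Every term in the Duhamel series is therefore non-negative, and hence so is the sum; convergence and the interchange of sums and integrals are automatic in finite dimension. I expect the main obstacle to be this sign book-keeping: four separate sources of signs---the built-in $i^{\sigma(\mathfrak{I})}$, the $(-1)^k$ from Duhamel, the Fermi signs picked up when crossing $\Pi$, and the coupling restrictions~\eqref{eq:CouplingRestriction}---must conspire so that every imaginary contribution is annihilated by a vanishing trace while every surviving real contribution is manifestly non-negative. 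This delicate cancellation is precisely what the assumption~\eqref{eq:CouplingRestriction} is designed to achieve.
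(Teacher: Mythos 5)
Your argument is correct and lands on the same structural skeleton as the paper's proof, but it replaces the paper's expansion device with a different one: you use the Duhamel/Dyson series in powers of $H_{0}$, with interaction terms conjugated to $C_{\mathfrak{I}}(s)=e^{sH_{-}}C_{\mathfrak{I}}e^{-sH_{-}}$, whereas the paper uses the Lie product formula and expands the $k$-fold product $\lrp{(1-H_{0}/k)e^{-H_{-}/k}e^{-H_{+}/k}}^{k}$. The combinatorial core is identical to the paper's Lemmas \ref{lem:Counting} and \ref{lem:main_identity}: push all $\mathfrak{A}_{-}$ factors left and all $\mathfrak{A}_{+}$ factors right; the Fermi sign $(-1)^{m(m-1)/2}$ combines with the explicit $i^{m}$ to give $i^{m^{2}}$; odd total parity forces the trace to vanish via Proposition \ref{prop:monomials}; and the restrictions \eqref{eq:CouplingRestriction} make the surviving coupling products non-negative. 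What your route buys is a mild streamlining of the bookkeeping: since your $C_{\mathfrak{I}}(s)$ are no longer monomials, you are forced to count swap signs through the $\mathbb{Z}_{2}$-grading (parities $\sigma(\mathfrak{I}_{j})$) rather than through the literal Majorana counts $n(\mathfrak{I}_{j})$ that the paper tracks, and the identity $i^{m}(-1)^{m(m-1)/2}=i^{m^{2}}$ is the parity-level version of the paper's computation \eqref{eq:minus_signs}. Two small points to tidy: your coefficient ``$\pm\,i^{m^{2}}(-1)^{m}\prod_{j}|J_{\mathfrak{I}_{j}\,\vartheta\mathfrak{I}_{j}}|$'' is cleaner as $(-\epsilon)^{m}\prod_{j}|J_{\mathfrak{I}_{j}\,\vartheta\mathfrak{I}_{j}}|$ with $\epsilon$ the common sign of the $\sigma=1$ couplings, which is manifestly non-negative for even $m$; and the case $A\in\mathfrak{A}_{+}^{\even}$ still needs the one-line reduction $A\,\vartheta(A)=\widetilde{A}\,\vartheta(\widetilde{A})$ with $\widetilde{A}=\vartheta^{-1}(A)\in\mathfrak{A}_{-}^{\even}$, as in the paper. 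Neither is a genuine gap.
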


\noindent\textbf{Remark:}
The functional \eqref{eq:rp} does not satisfy reflection positivity on the full fermonic algebra $\mathfrak{A}_{\pm}$. Even for $N=1$, with $H_{\pm}=0$, $H_{0}=-i\,c_{1}\vartheta(c_{1})$, and $A=c_{1}$, reflection positivity fails. In this case 
	\be\label{eq:Counterexample}
		\Tr(A\,\vartheta(A)\,e^{-H})=-2i\sinh 1\;,
	\ee
is purely imaginary. A similar argument shows that reflection positivity fails in case the coupling constants do not obey the restrictions \eqref{eq:CouplingRestriction}.

If the interaction terms in $H_{0}$ all have $\sigma_{\mathfrak{I}}=0$, then the functional \eqref{eq:rp} vanishes on odd elements of $\mathfrak{A}$,  and in this case reflection-positivity extends trivially to the full algebra.

There is a natural second reflection positivity  condition connected with the functional 
\be\label{eq:rp functional prime}
\Tr(\vartheta(A)B\,e^{-H})\,,
\qquad\text{for}\quad
A,B\in \mathfrak{A}_{\pm}^{\even}\;,
\ee
in place of \eqref{eq:rp}.  The properties \eqref{eq:Anti-homomorphism Theta}--\eqref{eq:Reflected Trace} ensure that 
	\be
		\Tr(\vartheta(A)B\,e^{-H})
		=\overline{\Tr(A\vartheta(B)\,e^{-\vartheta(H)})}\,.
	\ee
Since the assumed properties for $H$ hold also for $\vartheta(H)$ with $H_{\mp}$ replaced by $\vartheta(H_{\pm})$, we infer the following corollary.

\begin{cor}[\bf Reflection Positivity III]\label{prop:reflection_positivity cor}
Consider $A\in\mathfrak{A}_{\pm}^{\even}$ and $H$ of the form \eqref{eq:Hamiltonian}, with $H_{+}=\vartheta(H_{-})$. Then the functional \eqref{eq:rp functional prime} is positive on the diagonal,
\be\label{eq:rp prime}
0\leqslant\Tr(\vartheta(A)A\,e^{-H})\,.
\ee
\end{cor}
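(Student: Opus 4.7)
The proof is essentially sketched in the paragraph immediately preceding the corollary, so my plan is to fill in the two small verifications needed to turn that sketch into a proof.

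The plan is to reduce the statement to Theorem \ref{prop:reflection_positivity} applied to the reflected Hamiltonian $\vartheta(H)$. First I would specialize the displayed identity $\Tr(\vartheta(A)B\,e^{-H})=\overline{\Tr(A\vartheta(B)\,e^{-\vartheta(H)})}$ to the diagonal $B=A$, obtaining
\be
\Tr(\vartheta(A)A\,e^{-H})=\overline{\Tr(A\,\vartheta(A)\,e^{-\vartheta(H)})}\,.
\ee
Before using this, I would justify the identity itself from the basic properties already collected in the paper: apply $\Tr(X)=\overline{\Tr(\vartheta(X))}$ from \eqref{eq:Reflected Trace} to $X=\vartheta(A)B\,e^{-H}$, then use the homomorphism property and $\vartheta^{2}=I$ in \eqref{eq:Anti-homomorphism Theta} to move $\vartheta$ across the product, and use anti-linearity to get $\vartheta(e^{-H})=e^{-\vartheta(H)}$ term by term in the power series.

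The second step is to check that $\vartheta(H)$ falls under the hypothesis of Theorem \ref{prop:reflection_positivity}. Writing $\vartheta(H)=\vartheta(H_{+})+\vartheta(H_{0})+\vartheta(H_{-})$, the reflection symmetry $\vartheta(H_{0})=H_{0}$ established in \S\ref{sect:Hamiltonians} shows the cross-plane part is unchanged, so in particular the sign restrictions \eqref{eq:CouplingRestriction} still hold. The pieces $\vartheta(H_{\pm})$ lie in $\mathfrak{A}_{\mp}^{\even}$ by the definition of $\vartheta$ on the algebras, and the required matching condition is $\vartheta(H_{-})=\vartheta(\vartheta(H_{+}))=H_{+}$, which follows from $\vartheta^{2}=I$ together with the given $H_{+}=\vartheta(H_{-})$. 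Hence the hypotheses of Theorem \ref{prop:reflection_positivity} are satisfied by $\vartheta(H)$ with the same $A\in\mathfrak{A}_{\pm}^{\even}$.

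Once these two items are in place the conclusion is immediate: Theorem \ref{prop:reflection_positivity} applied to $\vartheta(H)$ gives $\Tr(A\,\vartheta(A)\,e^{-\vartheta(H)})\geqslant 0$, so in particular this number is real, and its complex conjugate $\Tr(\vartheta(A)A\,e^{-H})$ is also non-negative. I do not anticipate any real obstacle here; the only sub-step that deserves care is the verification that $\vartheta(H)$ really inherits the structural hypotheses of $H$ (in particular the sign conditions on the $J_{\mathfrak{I}\,\vartheta\mathfrak{I}}$), and this reduces to the already proved reflection symmetry of $H_{0}$.
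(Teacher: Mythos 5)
Your proposal is correct and follows essentially the same route as the paper, whose entire proof of the corollary is the paragraph preceding it: the conjugation identity $\Tr(\vartheta(A)B\,e^{-H})=\overline{\Tr(A\,\vartheta(B)\,e^{-\vartheta(H)})}$ together with the observation that $\vartheta(H)$ satisfies the hypotheses of Theorem~\ref{prop:reflection_positivity}. The two verifications you supply (deriving the identity from \eqref{eq:Anti-homomorphism Theta}--\eqref{eq:Reflected Trace} and checking that $\vartheta(H)$ inherits the structural assumptions via $\vartheta(H_{0})=H_{0}$ and $\vartheta^{2}=I$) are exactly the details the paper leaves implicit.
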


\begin{proof}[Proof of Theorem \ref{prop:reflection_positivity}]
Our argument is motivated by \cite{DLS,FILS,Lieb1994}, but has its own special features. Take  $A\in\mathfrak{A}_{-}^{\even}$.  Use the Lie product formula for matrices $\alpha_{1}$, $\alpha_{2}$, and $\alpha_{3}$ in the form 
\be\label{eq:LT}
 e^{\alpha_{1}+\alpha_{2}+\alpha_{3}}=\lim\limits_{k\rightarrow\infty}\left((1+\alpha_{1}/k)e^{\alpha_{2}/k}e^{\alpha_3/k}\right)^{k}\;.
\ee
This is norm-convergent for matrices. Take $\alpha_{1}=-H_{0}$, $\alpha_{2}=-H_{-}$, and $\alpha_{3}=-H_{+}=-\vartheta(H_{-})$ in \eqref{eq:LT}.

Label the non-empty subsets of $\Lambda_{-}$ by $\mathfrak{I}_{\ell}$, for $\ell=1,\ldots,L-1$, with $L= 2^{\abs{\Lambda_{-}}}$, and the empty subset $\varnothing$ by $\mathfrak{I}_{0}$. Let $H_{0}$ be defined in \eqref{eq:Hamiltonian_interaction}, with the sum ranging over the non-empty subsets. Write 
	\be
		H_{0}
		= 
		\sum_{\ell=1}^{L-1}{J_{\mathfrak{I}_{\ell}\,\vartheta \mathfrak{I}_{\ell}}}
		\ i^{\sigma(\mathfrak{I}_{\ell})}\,C_{\mathfrak{I}_{\ell}}\,\vartheta(C_{\mathfrak{I}_{\ell}})\;.
	\ee
Using \eqref{eq:LT},   
	\be\label{eq:Approx k}
		A\,\vartheta(A)\,e^{-H}
		= \lim\limits_{k\rightarrow\infty} 
			A\,\vartheta(A)\lrp{e^{-H}}_{k}\,,
	\ee
where
	\be\label{eq:Exp_minusH_k}
		\lrp{e^{-H}}_{k}
		=
		\lrp{(I
		-\sum_{\ell=1}^{L-1}{J_{\mathfrak{I}_{\ell}\,\vartheta \mathfrak{I}_{\ell}}}
		\ i^{\sigma(\mathfrak{I}_{\ell})}\,C_{\mathfrak{I}_{\ell}}\,\vartheta(C_{\mathfrak{I}_{\ell}})/k)
		\,e^{-H_{-}/k}\,e^{-\vartheta(H_{-})/k}}^{k}\;.
	\ee
One can include the term $I$ in the sums in \eqref{eq:Exp_minusH_k} by defining $-J_{\varnothing\,\vartheta\varnothing}=k$,  $C_{\varnothing}=C_{\vartheta\varnothing}=I$, and $n(\mathfrak{I}_{\ell_{0}})=n(\varnothing)=0$. Then 
	\beq\label{eq:Exp_minusH_k-2}
		\lrp{e^{-H}}_{k}
		&=&
		\frac{1}{k^{k}}
		\lrp{-\sum_{\ell=0}^{L-1}{J_{\mathfrak{I}_{\ell}\,\vartheta \mathfrak{I}_{\ell}}}
		\ i^{\sigma(\mathfrak{I}_{\ell})}\,C_{\mathfrak{I}_{\ell}}\,\vartheta(C_{\mathfrak{I}_{\ell}})
		\,e^{-H_{-}/k}\,e^{-\vartheta(H_{-})/k}}^{k}\nonumber\\
		&=& \sum_{\ell_{1},\ldots,\ell_{k}=0}^{L-1}
		i^{\sum_{i=1}^{k}\sigma(\mathfrak{I}_{\ell_{i}})}\,
		\mathfrak{c}_{\ell_{1},\ldots,\ell_{k}}\,Y_{\ell_{1},\ldots,\ell_{k}}\;.
	\eeq
In the second equality we have expanded the expression into a linear combination of $L^{k}$ terms  with coefficients 
	\be\label{eq:product_couplings}
		\mathfrak{c}_{\ell_{1},\ldots,\ell_{k}}
		= \frac{ 1}{k^{k}} \prod_{i=1}^{k}(-J_{\mathfrak{I}_{\ell_{i}}\,			\vartheta\mathfrak{I}_{\ell_{i}}})\,,
	\ee 
and with 
	\be
	\label{eq:main_identity=0}
		Y_{\ell_{1},\ldots,\ell_{k}}
		=
		C_{\mathfrak{I}_{\ell_{1}}}\vartheta(C_{\mathfrak{I}_{\ell_{1}}})\,e^{-H_{-}/k}
		\, e^{-\vartheta(H_{-})/k}\cdots C_{\mathfrak{I}_{\ell_{k}}}
		\vartheta(C_{\mathfrak{I}_{\ell_{k}}})\,e^{-H_{-}/k}\,e^{-\vartheta(H_{-})/k}\;.
	\ee 
Using this expansion,  \eqref{eq:Approx k} can be written  
	\be\label{eq:Expansion}
		A\,\vartheta(A)\lrp{e^{-H}}_{k}	
		= 
		\sum_{\ell_{1},\ldots,\ell_{k}=0}^{L-1}
		i^{\sum_{i=1}^{k}\sigma(\mathfrak{I}_{\ell_{i}})}\,
		\mathfrak{c}_{\ell_{1},\ldots,\ell_{k}}\,
		A\,\vartheta(A)\,Y_{\ell_{1},\ldots,\ell_{k}}\;.
	\ee

\begin{lem}\label{lem:Counting}
The trace $\Tr(A\,\vartheta(A)\,Y_{\ell_{1},\ldots,\ell_{k}})=0$ vanishes  unless  
	\be
	\label{eq:condition_sigma}
		\sum_{i=1}^{k}n(\mathfrak{I}_{\ell_{i}})=2\mathfrak{N}\;,
	\ee
is an even integer.  In this case, 
	\be
	\label{eq:Coupling Constants Positive}
		\sum_{i=1}^{k}\sigma(\mathfrak{I}_{\ell_{i}})=0\mod2\;,
		\qquad\text{and}\quad
		0 \leqslant 
		\mathfrak{c}_{\ell_{1},\ldots,\ell_{k}}\;.
	\ee
\end{lem}

\begin{proof}
In order to establish \eqref{eq:condition_sigma}, recall that we assume that the factor $A$ in $A\,\vartheta(A)\,Y_{\ell_{1},\ldots,\ell_{k}}$ is an element of $\mathfrak{A}_{-}^{\even}$.  Therefore we can expand it as a sum  of the form \eqref{eq:A_expansion}, with all the basis elements $M_{\beta}\in\mathfrak{A}_{-}^{\even}$.   As $H_{-}\in \mathfrak{A}_{-}^{\even}$, one can also expand each factor $e^{-H_{-}/k}$ as a sum of even basis elements $M_{\beta}\in\mathfrak{A}_{-}^{\even}$. Each {\em interaction term}, defined as a summand $C_{\mathfrak{I}_{\ell_{j}}}\vartheta(C_{\mathfrak{I}_{\ell_{j}}})$ in $H_{0}$, contains $n(\mathfrak{I}_{\ell_{j}})$ Majoranas in $\mathfrak{A}_{-}$ and an equal number in $\mathfrak{A}_{+}$.  
 
 We infer from Proposition \ref{prop:monomials} that the trace of $A\,\vartheta(A)\,Y_{\ell_{1},\ldots,\ell_{k}}$ vanishes unless each $c_{i}$ occurs in $A\,\vartheta(A)\,Y_{\ell_{1},\ldots,\ell_{k}}$ an even number of times. Consequently any $A\,\vartheta(A)\,Y_{\ell_{1},\ldots,\ell_{k}}$ with non-zero trace must have an even number of Majoranas in $\mathfrak{A}_{-}$.  In other words, the condition  \eqref{eq:condition_sigma} must hold. This ensures the number of odd $n(\mathfrak{I}_{\ell_{j}})$ is even. As $\sigma(\mathfrak{I}_{\ell_{j}})=n(\mathfrak{I}_{\ell_{j}})\mod2$, the sum of $\sigma(\mathfrak{I}_{\ell_{j}})$'s equals $0\mod 2$.

We next show that $0\leqslant \mathfrak{c}_{\ell_{1},\ldots,\ell_{k}}$. Suppose the interaction term $C_{\mathfrak{I}_{\ell_{j}}}\vartheta(C_{\mathfrak{I}_{\ell_{j}}})$  occurs as a factor in $A\,\vartheta(A)\,Y_{\ell_{1},\ldots,\ell_{k}}$ and has $\sigma(\mathfrak{I}_{\ell_{j}})=0$.  Then  the restriction on the coupling constants \eqref{eq:CouplingRestriction} means that $0\leqslant -J_{\mathfrak{I}_{\ell_{}j}\,\vartheta\mathfrak{I}_{\ell_{j}}}$. On the other hand, the condition \eqref{eq:Coupling Constants Positive} on $\sigma(\mathfrak{I}_{\ell_{j}})$ means that an even number of interaction terms in $A\,\vartheta(A)\,Y_{\ell_{1},\ldots,\ell_{k}}$ have $\sigma(\mathfrak{I}_{\ell_{j}})=1$.   From the restriction \eqref{eq:CouplingRestriction}, we infer that these couplings all have the same sign. Hence the product of the negative of these coupling constants is also positive.  Finally we use $0<J_{\varnothing\, \vartheta\varnothing}$ to complete the proof.
\end{proof}

\begin{lem}\label{lem:main_identity}
Assume relations \eqref{eq:condition_sigma}--\eqref{eq:Coupling Constants Positive}. Then the $Y_{\ell_{1},\ldots,\ell_{k}}$ in  \eqref{eq:main_identity=0} satisfy the identities
\be\label{eq:main_identity}
Y_{\ell_{1},\ldots,\ell_{k}}
=i^{-\sum_{i=1}^{k}\sigma(\mathfrak{I}_{\ell_{i}})}\,\ D_{{\ell_{1},\ldots,\ell_{k}}}\,\vartheta(D_{{\ell_{1},\ldots,\ell_{k}}})\,,
\ee
where 
\be\label{eq:C}
D_{{\ell_{1},\ldots,\ell_{k}}}=C_{\mathfrak{I}_{\ell_{1}}}\,e^{-H_{-}/k}\,C_{\mathfrak{I}_{\ell_{2}}}\,e^{-H_{-}/k}\cdots C_{\mathfrak{I}_{\ell_{k}}}\,e^{-H_{-}/k}\in\mathfrak{A}_{-}^{\even}\,.
\ee
\end{lem}
\begin{proof}
As  $e^{-H_{+}/k}=e^{-\vartheta (H_{-})/k}=\vartheta(e^{-H_{-}/k})$, the 
product $Y_{\ell_{1},\ldots,\ell_{k}}$ in \eqref{eq:main_identity=0}  differs from the product $D_{\ell_{1},\ldots,\ell_{k}}\,\vartheta(D_{\ell_{1},\ldots,\ell_{k}})$, only in the order of its factors.  
In order to transform from one product into the other, we need to move all the Majorana operators of $Y_{\ell_{1},\ldots,\ell_{k}}$ that are localized in $\mathfrak{A}_{-}$ to the left, and all operators of $Y_{\ell_{1},\ldots,\ell_{k}}$ in $\mathfrak{A}_{+}$ to the right.
We move each operator $c_{j}$ as far as possible to the left, without permuting the order of any operator in $\mathfrak{A}_{-}$. As $H_{+}\in\mathfrak{A}_{+}^{\even}$, it commutes with each $c_{j}\in\mathfrak{A}_{-}$.  Likewise  $H_{-}\in\mathfrak{A}_{-}^{\even}$, it commutes with each $c_{j}\in\mathfrak{A}_{+}$.
This procedure neither changes any of the exponentials $e^{-H_{\pm}/k}$. It gives rise to a minus sign only each time we permute a $c_{j}$ in an interaction term to the left past an operator $\vartheta(c_{j'})$ in another interaction term. 

We count the minus signs that occur from permuting the $c$'s in the interaction terms. In order to simplify notation, let $n_{\ell_{i}}=n(\mathfrak{I}_{\ell_{i}})$.  The term $C_{\mathfrak{I}_{\ell_{1}}}\vartheta(C_{\mathfrak{I}_{\ell_{1}}})$ contributes no minus sign. The term $C_{\mathfrak{I}_{\ell_{2}}}\vartheta(C_{\mathfrak{I}_{\ell_{2}}})$ contributes $n_{\ell_{2}}n_{\ell_{1}}$ minus signs. The term $C_{\mathfrak{I}_{\ell_{3}}}\vartheta(C_{\mathfrak{I}_{\ell_{3}}})$ contributes $n_{\ell_{3}}(n_{\ell_{1}}+n_{\ell_{2}})$ minus signs.  The term $C_{\mathfrak{I}_{\ell_{4}}}\vartheta(C_{\mathfrak{I}_{\ell_{4}}})$ contributes $n_{\ell_{4}}(n_{\ell_{1}}+n_{\ell_{2}}+n_{\ell_{3}})$ minus signs, and so on.  Finally, the term $$C_{\mathfrak{I}_{\ell_{k}}}\vartheta(C_{\mathfrak{I}_{\ell_{k}}})$$ contributes $n_{\ell_{k}}\sum_{i=1}^{k-1}n_{\ell_{i}}$ minus signs.  Adding these numbers, one obtains a total number of minus signs equal to
	\be
	\label{eq:minus_signs}
		\frac{1}{2}\sum_{i,i'=1}^{k} n_{\ell_{i}}\, n_{\ell_{i'}}
			-\frac{1}{2}\sum_{i=1}^{k}  n_{\ell_{i}}^2
		= \frac{1}{2}\lrp{\sum_{i=1}^{k} 
			n_{\ell_{i}}}^2
			- \frac{1}{2} \sum_{i=1}^{k}  n_{\ell_{i}}^2
		= 2\mathfrak{N}^2- \frac{1}{2} \sum_{i=1}^{k}  n_{\ell_{i}}^2\,.
	\ee
Here $\mathfrak{N}$ is defined in \eqref{eq:condition_sigma}.
We infer that
	\be 
	\label{Sign from Permutation}
		\lrp{2\mathfrak{N}^2- \frac{1}{2} \sum_{i=1}^{k}   n_{\ell_{i}}^2} \mod2
		=  -\frac{1}{2} \sum_{i=1}^{k}  n_{\ell_{i}}^2 \mod2\;.
	\ee

The overall  sign arising from the permutation of the $c$'s in going from \eqref{eq:main_identity=0} to \eqref{eq:main_identity} is $(-1)$ raised to the power \eqref{Sign from Permutation}.  This is   
	\be
		(-1)^{-\frac{1}{2}\sum_{i=1}^{k}n_{\ell_{i}}^2}
		=i^{-\sum_{i=1}^{k}n_{\ell_{i}}^2}
		=i^{-\sum_{i=1}^{k}(n_{\ell_{i}}\mod2)}
		=i^{-\sum_{i=1}^{k}\sigma_{\ell_{i}}}\,.
	\ee
In the second equality we use an identity for natural numbers $n$, namely 
	\be
		n^2\mod4=n\mod 2\;.
	\ee
In the final equality we use  the definition $\sigma_{\ell_{i}}=n_{\ell_{i}}\mod2$.
\end{proof}

\noindent{\it Completion of the proof of Theorem \ref{prop:reflection_positivity}.}  
In case $\Tr(A\,\vartheta(A)\,Y_{\ell_{1},\ldots,\ell_{k}})\neq0$, we infer from\eqref{eq:Expansion} along with Lemmas \ref{lem:Counting} and  \ref{lem:main_identity} and the fact that $\vartheta(A)$ commutes with $D_{\ell_{1},\ldots,\ell_{k}}$  that	
	\be\label{eq:ExpansionResult}
		\Tr\lrp{A\,\vartheta(A)\,e^{-H}}
		=\lim_{k\to\infty} 
		\sum_{\ell_{1},\ldots,\ell_{k}=0}^{L-1}
		\mathfrak{c}_{\ell_{1},\ldots,\ell_{k}}\,
		\Tr\lrp{AD_{\ell_{1},\ldots,\ell_{k}}
		\vartheta\lrp{AD_{\ell_{1},\ldots,\ell_{k}}}}\;.
	\ee
Notice that the factors of $i$ in \eqref{eq:Expansion} cancel against the factors of $i$ in \eqref{eq:main_identity}, so there are no factors of $i$ in \eqref{eq:ExpansionResult}. In the last statement of Lemma  \ref{lem:Counting}, we have established that $0\leqslant \mathfrak{c}_{\ell_{1},\ldots,\ell_{k}}$.  And from Proposition~\ref{prop:positivity}, we infer that   $0 \leqslant \Tr\lrp{AD_{\ell_{1},\ldots,\ell_{k}}\vartheta\lrp{AD_{\ell_{1},\ldots,\ell_{k}}}}$.  Thus \eqref{eq:ExpansionResult}  is a sum of positive terms. 
This completes the proof in the case that $A\in\mathfrak{A}_{-}^{\even}$.  

The remaining case is $A\in\mathfrak{A}_{+}^{\even}$.  Then one has  $A=\vartheta(\widetilde{A})$ with $\widetilde{A}\in\mathfrak{A}_{-}^{\even}$. As $A$ commutes with $\vartheta(A)$, we infer that $A\,\vartheta(A)=\widetilde{A} \,\vartheta(\widetilde {A})$, and $\Tr\lrp{A\,\vartheta(A)\,e^{-H}}=\Tr\lrp{\widetilde{A} \,\vartheta(\widetilde {A})\,e^{-H}}\geqslant0$ as a consequence of the case already established.
\end{proof}

\subsection{Reflection-Positive Inner Product}
Let us introduce the modified pre-inner product on $\mathfrak{A}_{\pm}^{\even}$ defined by the functional \eqref{eq:rp}. Let
\be\label{eq:inner_product_rp}
\lra{A,B}_{\rp}=\Tr(A\,\vartheta(B)\,e^{-H})\,.
\ee
Denote the corresponding semi-norm by $\Vert A \Vert_{\rp}$.

The theorem shows that one has an elementary reflection positivity bound, arising from the Schwarz inequality.  Also $\vartheta$ acts as anti-unitary transformation on the Hilbert space $\mathfrak{A}_{\pm}^{\even}$ with inner product \eqref{eq:inner_product_rp}.
\begin{cor}
For $A,B\in\mathfrak{A}_{\pm}^{\even}$,	one has
\be
\abs{\lra{A,B}_{\rp}\,}
\leqslant \norm{A}_{\rp}\,\norm{B}_{\rp}\;,
\ee
and
\be
\lra{A,B}_{\rp}
=
\lra{\vartheta(B),\vartheta(A)}_{\rp}\,,
\qquad\text{so}\quad
\norm{\vartheta(A)}_{\rp} 
= \norm{A}_{\rp}\;. 
\ee
\end{cor}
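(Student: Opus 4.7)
The plan is to reduce the corollary to three structural facts about the form $\lra{\cdot,\cdot}_{\rp}$: sesquilinearity, Hermitian symmetry, and positivity on the diagonal. The last of these is exactly Theorem~\ref{prop:reflection_positivity}, so the work consists of verifying the first two and then invoking the standard Cauchy--Schwarz argument.

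First I would note that $\lra{A,B}_{\rp}=\Tr(A\,\vartheta(B)\,e^{-H})$ is linear in $A$ by linearity of the trace, and anti-linear in $B$ because $\vartheta$ is anti-linear. Next I would establish Hermitian symmetry $\lra{A,B}_{\rp}=\overline{\lra{B,A}_{\rp}}$. The crucial input is that $H$ is reflection-invariant: the hypothesis $H_{+}=\vartheta(H_{-})$ combined with the identity $\vartheta(H_{0})=H_{0}$ proved in \S\ref{sect:Hamiltonians} gives $\vartheta(H)=H$, hence $\vartheta(e^{-H})=e^{-H}$. Then by \eqref{eq:Reflected Trace} and \eqref{eq:Anti-homomorphism Theta},
\be
\overline{\lra{B,A}_{\rp}}=\overline{\Tr(B\,\vartheta(A)\,e^{-H})}=\Tr\lrp{\vartheta(B)\,A\,e^{-H}}\,.
\ee
Since $A\in\mathfrak{A}_{\pm}^{\even}$ and $\vartheta(B)\in\mathfrak{A}_{\mp}^{\even}$ are supported on disjoint halves of $\Lambda$, they commute by the remark at the end of \S II, so this expression equals $\Tr(A\,\vartheta(B)\,e^{-H})=\lra{A,B}_{\rp}$.

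With sesquilinearity, Hermitian symmetry, and $0\leqslant\lra{A,A}_{\rp}$ in hand, the Schwarz inequality is the standard two-line argument: expand $\lra{A+\lambda B,A+\lambda B}_{\rp}\geqslant0$ for $\lambda\in\mathbb{C}$, collect the four terms using Hermitian symmetry, and optimize over $\lambda$, treating $\lra{B,B}_{\rp}=0$ as a degenerate case by varying $\lambda$ along a ray to force $\lra{A,B}_{\rp}=0$.

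For the reflection identity, I would compute directly
\be
\lra{\vartheta(B),\vartheta(A)}_{\rp}=\Tr\lrp{\vartheta(B)\,\vartheta^{2}(A)\,e^{-H}}=\Tr\lrp{\vartheta(B)\,A\,e^{-H}}
\ee
using $\vartheta^{2}=I$, then apply the same even-algebra commutativity to conclude that this equals $\lra{A,B}_{\rp}$. Setting $B=A$ delivers the norm invariance $\norm{\vartheta(A)}_{\rp}=\norm{A}_{\rp}$. The only step that is not purely formal is the reflection invariance of $e^{-H}$; this is the fact that makes Hermitian symmetry go through and, with it, the whole argument.
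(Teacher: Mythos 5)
Your proposal is correct and follows exactly the route the paper intends: the corollary is stated there with only the one-line remark that it is ``an elementary reflection positivity bound, arising from the Schwarz inequality'' applied to the positive form of Theorem~\ref{prop:reflection_positivity}. Your explicit verification of Hermitian symmetry and of the identity $\lra{A,B}_{\rp}=\lra{\vartheta(B),\vartheta(A)}_{\rp}$ --- via $\vartheta(e^{-H})=e^{-H}$, $\vartheta^{2}=I$, and the commutativity of even subalgebras on disjoint halves --- simply supplies the details the paper leaves implicit, and each step checks out.
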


\section{Relation to Spin Systems}
It is well-known that the ferromagnetic Ising model is reflection-positive, but the quantum Heisenberg model is not reflection-positive \cite{FILS}. We can also infer these facts from the point of view of Majoranas. 

One can consider the infinitesimal rotation matrices in the $(\alpha,\beta)$-plane,  $\Sigma^{\alpha\beta}=\frac{-i}{2} \comm{\gamma^{\alpha}}{\gamma^{\beta}}$, with  $\gamma^{\alpha}$ the Euclidean  Dirac matrices on  $4$-space with coordinate labels $\alpha,\beta\in\{0,x,y,z\}$. 
In the notation sometimes  used in condensed-matter physics, one assigns Dirac matrices $\gamma_{j}^{\alpha}$ as four Majoranas $ c_{j},b_{j}^{x}, b_{j}^{y}, b_{j}^{z}$  at each lattice site. We use a real representation for $b_{j}^{x}$ and $b_{j}^{z}$, and an imaginary representation for $b_{j}^{y}$ and $c_{j}$.  One could also use a real representation for $b^{y}_{j}$ and $c_{j}$, and an imaginary representation for $b_{j}^{x}$ and $b_{j}^{z}$.

Then the three $(0,\alpha)$ planes yield $\Sigma^{0\alpha}_{j}=\sigma_{j}^{\alpha}=i\, b_{j}^{\alpha}c_{j}$.  They agree with the Pauli matrices when projected to one chiral copy, namely to the subspace of the Hilbert space $\calh$ of the Majoranas, on which each of the mutually commuting operators $\gamma_{j}^{5}=b_{j}^{x}b_{j}^{y}b_{j}^{z}c_{j}$  has the eigenvalue $+1$. Note that each $\gamma_{j}$ commutes with  all the $\vec \sigma_{j'}$.  With these choices, the $\sigma^{x, z}_{j}$ are real, while $\sigma^{y}_{j}$ is imaginary.\footnote{ These three operators correspond to half of the generators $\Sigma_{j}^{\alpha\beta}$, and we use this representation. The other three generators $\Sigma_{j}^{\alpha\beta}$  for $\alpha,\beta\neq0$  act the same on both chiral copies, and as they are isomorphic on each copy they yield an alternative representation $\sigma^{x}_{j}=-i b^{y}_{j}\,b^{z}_{j}$, etc., which is also sometimes used in the condensed-matter literature.}  

For a reflection across a nearest-neighbor bond $(ij)$, a ferromagnetic Ising interaction term is 
	\be
		-\sigma_{i}^{z}\sigma_{j}^{z}
		= b_{i}^{z}c_{i} b_{j}^{z} c_{j}
		= - b_{i}^{z}c_{i} \,\vartheta\lrp{b_{i}^{z}c_{i}}\;.
	\ee
This satisfies condition \eqref{eq:CouplingRestriction} with $k=2$ and $\sigma=0$. Similarly, the quantum ``rotator'' Hamiltonian has an interaction term,  
\be
		- \sigma_{i}^{x} \,\sigma_{j}^{x}
		- \sigma_{i}^{z} \,\sigma_{j}^{z}
		=  -b_{i}^{x}c_{i} \,\vartheta\lrp{b_{i}^{x}c_{i}}
		-  b_{i}^{z}c_{i} \,\vartheta\lrp{b_{i}^{z}c_{i}}\;.
	\ee
This also satisfies condition \eqref{eq:CouplingRestriction}, and so is reflection-positive.   The corresponding quantum Heisenberg  interaction term is 
	\be
		-\vec \sigma_{i} \cdot \vec \sigma_{j}
		= - \sigma_{i}^{x} \,\sigma_{j}^{x} 
		-\sigma_{i}^{y} \,\sigma_{j}^{y} 
		- \sigma_{i}^{z} \,\sigma_{j}^{z}
		=
		 -b_{i}^{x}c_{i} \,\vartheta\lrp{b_{i}^{x}c_{i}}
		+ b_{i}^{y}c_{i} \,\vartheta\lrp{b_{i}^{y}c_{i}}
		-  b_{i}^{z}c_{i} \,\vartheta\lrp{b_{i}^{z}c_{i}}\;.
	\ee
This does not satisfy \eqref{eq:CouplingRestriction}, since one of the interaction coefficients of the term $ b_{i}^{y}c_{i} \,\vartheta\lrp{b_{i}^{y}c_{i}}$ arising from $-\sigma_{i}^{y} \,\sigma_{j}^{y} $ is positive.   

\setcounter{equation}{0}  \section{Reflection Bounds}\label{sect:Reflection_Bounds}
The use of reflection bounds and their iteration has many applications, both in statistical physics and quantum field theory. Here we study some bounds which follow from the results of Section \ref{sec:RP}, that we apply in \cite{CJLP}.

Let us introduce two pre-inner products $\lra{\ \cdot\,,\cdot \ }_{\rp\pm}$ on the algebras $\mathfrak{A}_{\pm}^{\even}$, corresponding to two reflection symmetric Hamiltonians. Let 
\be
\lra{A,B}_{\rp-}=\Tr(A\,\vartheta(B)\,e^{-H})\,,\quad\text{for}\quad H=H_{-}+H_{0}+\vartheta(H_{-})\,.
\ee
Similarly define
\be
\lra{A,B}_{\rp+}=\Tr(A\,\vartheta(B)\,e^{-H})\,,\quad\text{for}\quad H=\vartheta(H_{+})+H_{0}+H_{+}\,.
\ee
As previously, one can define inner products on equivalence classes, yielding norms $\Vert \ \cdot\  \Vert$.

\begin{prop}[\bf RP-Bounds]
Let $H=H_{-}+H_{0}+H_{+}$ with $H_{\pm}\in\mathfrak{A}_{\pm}^{\even}$ and $H_{0}$ of the form \eqref{eq:Hamiltonian_interaction}. Then
\be\label{eq:bound_1}
\abs{\Tr(A\,\vartheta(B)\,e^{-H})}\leqslant \Vert A\Vert_{\rp-}\,\Vert B\Vert_{\rp+}\,,\qquad\text{for}\quad A,B\in\mathfrak{A}_{-}^{\even}\,.
\ee
Also
\be\label{eq:bound_2}
\abs{\Tr(A\,\vartheta(B)\,e^{-H})}\leqslant \Vert A\Vert_{\rp+}\,\Vert B\Vert_{\rp-}\,,\qquad\text{for}\quad A,B\in\mathfrak{A}_{+}^{\even}\,.
\ee
In particular for $A=B=I$, 
\be
\Tr(e^{-H})\leqslant \Tr(e^{-(H_{-}+H_{0}+\vartheta(H_{-}))})^{1/2}\,\Tr(e^{-(\vartheta(H_{+})+H_{0}+H_{+})})^{1/2}\,.
\ee
\end{prop}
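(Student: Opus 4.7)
The plan is to reuse the Lie--Trotter expansion from the proof of Theorem~\ref{prop:reflection_positivity}, now applied to the asymmetric Hamiltonian $H=H_{-}+H_{0}+H_{+}$, so that the expansion splits into a ``$-$-side'' factor carrying $H_{-}$ and a ``$+$-side'' factor which, after applying the reflection, carries $\vartheta(H_{+})\in\mathfrak{A}_{-}^{\even}$. Two successive Schwarz-type inequalities---the pointwise Schwarz inequality \eqref{eq:Schwarz} applied termwise, followed by a numerical Cauchy--Schwarz over the expansion indices---will then separate $A$ (paired with $H_{-}$) from $B$ (paired with $\vartheta(H_{+})$), and the two resulting sums should each re-sum to a reflection-symmetric Gibbs trace recognizable as $\Vert A\Vert_{\rp-}^{2}$ and $\Vert B\Vert_{\rp+}^{2}$ respectively.

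Concretely, I apply \eqref{eq:LT} with $\alpha_{1}=-H_{0}$, $\alpha_{2}=-H_{-}$, $\alpha_{3}=-H_{+}$ and expand $H_{0}$ via \eqref{eq:Hamiltonian_interaction} using the empty-subset convention of \eqref{eq:Exp_minusH_k-2}. The resulting strings are the analogues of \eqref{eq:main_identity=0} but with $e^{-H_{-}/k}e^{-H_{+}/k}$ in place of $e^{-H_{-}/k}e^{-\vartheta(H_{-})/k}$. Writing $e^{-H_{+}/k}=\vartheta(e^{-\vartheta(H_{+})/k})$ and noting that even exponentials on one side commute with Majoranas on the opposite side, the permutation count of Lemma~\ref{lem:main_identity}---which only tracks anti-commutations among the $c$'s in the interaction vertices---goes through verbatim and yields
\be
Y^{\rm mix}_{\ell_{1},\ldots,\ell_{k}}=i^{-\sum_{i}\sigma(\mathfrak{I}_{\ell_{i}})}\,D^{-}_{\ell_{1},\ldots,\ell_{k}}\,\vartheta(D^{+}_{\ell_{1},\ldots,\ell_{k}})\,,
\ee
where $D^{-}_{\ell_{1},\ldots,\ell_{k}}$ is as in \eqref{eq:C} and $D^{+}_{\ell_{1},\ldots,\ell_{k}}$ is the same formula with $H_{-}$ replaced by $\vartheta(H_{+})$; both lie in $\mathfrak{A}_{-}^{\even}$. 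Lemma~\ref{lem:Counting} still gives $\mathfrak{c}_{\ell_{1},\ldots,\ell_{k}}\geqslant 0$ on non-vanishing terms, since the sign restriction \eqref{eq:CouplingRestriction} concerns only $H_{0}$, which is unchanged. Using that $\vartheta(B)\in\mathfrak{A}_{+}^{\even}$ commutes with $D^{-}_{\ell_{1},\ldots,\ell_{k}}\in\mathfrak{A}_{-}^{\even}$, and writing $\vec\ell=(\ell_{1},\ldots,\ell_{k})$ for brevity, one obtains
\be
\Tr(A\,\vartheta(B)\,e^{-H})=\lim_{k\to\infty}\sum_{\vec\ell}\mathfrak{c}_{\vec\ell}\,\Tr\lrp{(AD^{-}_{\vec\ell})\,\vartheta(BD^{+}_{\vec\ell})}\,.
\ee

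Applying \eqref{eq:Schwarz} of Proposition~\ref{prop:positivity} termwise and then a numerical Cauchy--Schwarz on the $\vec\ell$-sum with weights $\mathfrak{c}_{\vec\ell}\geqslant 0$ bounds the absolute value by
\be
\lim_{k\to\infty}\lrp{\sum_{\vec\ell}\mathfrak{c}_{\vec\ell}\,\Tr\lrp{(AD^{-}_{\vec\ell})\vartheta(AD^{-}_{\vec\ell})}}^{1/2}\lrp{\sum_{\vec\ell}\mathfrak{c}_{\vec\ell}\,\Tr\lrp{(BD^{+}_{\vec\ell})\vartheta(BD^{+}_{\vec\ell})}}^{1/2}\,.
\ee
Each factor is the reverse of the Lie--Trotter expansion, now for a \emph{reflection-symmetric} Hamiltonian: the first converges to $\Tr(A\vartheta(A)e^{-(H_{-}+H_{0}+\vartheta(H_{-}))})=\Vert A\Vert_{\rp-}^{2}$ and the second to $\Vert B\Vert_{\rp+}^{2}$, establishing \eqref{eq:bound_1}. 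The bound \eqref{eq:bound_2} follows by the symmetric argument with the roles of $\pm$ interchanged (equivalently, by writing $A=\vartheta(\widetilde{A})$, $B=\vartheta(\widetilde{B})$ and applying \eqref{eq:bound_1} to the reflected Hamiltonian), and specializing either bound to $A=B=I$ immediately yields the partition-function inequality. The main potential obstacle is verifying that the sign-counting of Lemma~\ref{lem:main_identity} is robust under the asymmetric substitution, but since the count only depends on anti-commutations among $c$'s in the interaction vertices---and is insensitive to which even exponential sits between them---I expect this step to be routine.
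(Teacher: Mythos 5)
Your proposal is correct and follows essentially the same route as the paper: the same Lie--Trotter expansion of the asymmetric $H$, the same identification $Y_{\vec\ell}=i^{-\sum_i\sigma(\mathfrak{I}_{\ell_i})}D^{-}_{\vec\ell}\,\vartheta(D^{+}_{\vec\ell})$ with $D^{+}$ obtained from \eqref{eq:C} by replacing $H_{-}$ with $\vartheta(H_{+})$, the observation that Lemmas~\ref{lem:Counting} and~\ref{lem:main_identity} carry over unchanged, and the same two-step Schwarz argument re-summing to $\Vert A\Vert_{\rp-}\Vert B\Vert_{\rp+}$, with \eqref{eq:bound_2} obtained by reflection. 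No gaps.
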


\begin{proof}
The proof of \eqref{eq:bound_1} follows the proof of Theorem \ref{prop:reflection_positivity}. Use the expression \eqref{eq:Exp_minusH_k} to write $A\,\vartheta(B)\,\lrp{e^{-H}}_{k}$, which converges to $A\,\vartheta(B)\,e^{-H}$ as $k\to\infty$, namely 
	\beq\label{eq:ExpansionResult-2}
		\Tr\lrp{A\,\vartheta(B)\lrp{e^{-H}}_{k}}
		&=&
		\sum_{\ell_{1},\ldots,\ell_{k}=0}^{L-1}
		\mathfrak{c}_{\ell_{1},\ldots,\ell_{k}}\,
		\Tr\lrp{AD^{-}_{\ell_{1},\ldots,\ell_{k}}
		\vartheta\lrp{BD^{+}_{\ell_{1},\ldots,\ell_{k}}}}\nonumber\\
		&=&
		\sum_{\ell_{1},\ldots,\ell_{k}=0}^{L-1}
		\mathfrak{c}_{\ell_{1},\ldots,\ell_{k}}
		\lra{AD_{\ell_{1},\ldots,\ell_{k}}^{-},
		BD_{\ell_{1},\ldots,\ell_{k}}^{+}}_{\rp}\,.\nonumber\\
	\eeq
The form $\lra{\ \cdot\ ,\ \cdot\ }_{\rp}$ in \eqref{eq:ExpansionResult-2} is defined in \eqref{eq:inner_product_1}.  The difference is that now the terms contain $\vartheta(B)$ in place of $\vartheta(A)$, and  $D^{\pm}_{\ell_{1},\ldots,\ell_{k}}$ depends on  $H_{\pm}$. Thus  the  constants $\mathfrak{c}_{\ell_{1},\ldots,\ell_{k}}$ are given by \eqref{eq:product_couplings}, the matrices
$D_{\ell_{1},\ldots,\ell_{k}}^{-}\in\mathfrak{A}_{-}^{\even}$ are given by \eqref{eq:C},
and
	\be\label{eq:Cplus}
		\vartheta(D_{\ell_{1},\ldots,\ell_{k}}^{+})
		=\vartheta(C_{\mathfrak{I}_{\ell_{1}}})e^{-H_{+}/k}
			\vartheta(C_{\mathfrak{I}_{\ell_{2}}})e^{-H_{+}/k}\cdots 
			\vartheta(C_{\mathfrak{I}_{\ell_{k}}})
			e^{-H_{+}/k}\in\mathfrak{A}_{+}^{\even}\,.
	\ee
Lemma \ref{lem:Counting} depends only on the form of $H_{0}$ and the fact that $H_{\pm}\in\mathfrak{A}_{\pm}^{\even}$. Thus the lemma applies in this case as well. With these substitutions, the proof of Lemma \ref{lem:main_identity} also applies.  
%
%
%

To establish \eqref{eq:bound_1}, note that the product of couplings $\mathfrak{c}_{\ell_{1},\ldots,\ell_{k}}$ defined in \eqref{eq:product_couplings} are independent of $A$ and $B$, so as before we infer from Lemma \ref{lem:Counting} that $\mathfrak{c}_{\ell_{1},\ldots,\ell_{k}}\geqslant0$ whenever 
	$
		\lra{AD_{\ell_{1},\ldots,\ell_{k}}^{-},
		BD_{\ell_{1},\ldots,\ell_{k}}^{+}}_{\rp}
		\neq0
	$. 
Use the Schwarz inequality for $\lra{\ \cdot\,, \cdot\ }_{\rp}$ and the positivity of $\mathfrak{c}_{\ell_{1},\ldots,\ell_{k}}$ to obtain
\beq
	\abs{\Tr\lrp{A\,\vartheta(B)\,e^{-H}}}
	&=&
	\abs{\lim_{k\rightarrow\infty}
	\sum_{\ell_{1},\ldots,\ell_{k}=0}^{L-1}
	\mathfrak{c}_{\ell_{1},\ldots,\ell_{k}}\lra{AD_{\ell_{1},
	\ldots,\ell_{k}}^{-},BD_{\ell_{1},\ldots,\ell_{k}}^{+}}_{\rp}}
	\nonumber\\
	&\leqslant& \lim_{k\rightarrow\infty}
	\sum_{\ell_{1},\ldots,\ell_{k}=0}^{L-1}
	\mathfrak{c}_{\ell_{1},\ldots,\ell_{k}}^{1/2}\,\lra{AD_{\ell_{1},\ldots,\ell_{k}}^{-},AD_{\ell_{1},\ldots,\ell_{k}}^{-}}_{\rp}^{1/2}\nonumber\\
	&& \qquad \times\ \mathfrak{c}_{\ell_{1},\ldots,\ell_{k}}^{1/2}
	\,\lra{BD_{\ell_{1},\ldots,\ell_{k}}^{+},BD_{\ell_{1},
	\ldots,\ell_{k}}^{+}}_{\rp}^{1/2}\nonumber\\
	&\leqslant&\lim_{k\rightarrow\infty}
	\lrp{\sum_{\ell_{1},\ldots,\ell_{k}=0}^{L-1}
		\mathfrak{c}_{\ell_{1},\ldots,\ell_{k}}\lra{AD_{\ell_{1},
	\ldots,\ell_{k}}^{-},
		AD_{\ell_{1},\ldots,\ell_{k}}^{-}}_{\rp}}^{1/2}\nonumber\\
	&& \quad \times
	\lrp{
	\sum_{\ell_{1},\ldots,\ell_{k}=0}^{L-1}
	\mathfrak{c}_{\ell_{1},\ldots,\ell_{k}}\,
	\lra{BD_{\ell_{1},\ldots,\ell_{k}}^{+},BD_{\ell_{1},\ldots,\ell_{k}}^{+}}_{\rp}}^{1/2}\nonumber\\
	&=&\lra{A,A}_{\rp-}^{1/2}\ \lra{B,B}_{\rp+}^{1/2}
	=\Vert A\Vert_{\rp-}\,\Vert B\Vert_{\rp+}\,.
\eeq
This completes the proof of relation \eqref{eq:bound_1}.

When $A,B\in\mathfrak{A}_{+}^{\even}$, substitute in the left-hand side of \eqref{eq:bound_2} $A=\vartheta(\widetilde{A})$ and $B=\vartheta(\widetilde{B})$ with $\widetilde{A}, \widetilde{B}\in\mathfrak{A}_{-}^{\even}$. Since $A$ and $B$ commute with $\vartheta(A)$ and $\vartheta(B)$,
	\be
		\abs{\Tr(A\,\vartheta(B)\,e^{-H})}=							\abs{\Tr(\widetilde{B}\,\vartheta(\widetilde{A})\,e^{-H})}\,.
	\ee
Replacing $H_{-}$ by $\vartheta(H_{+})$ and $\vartheta(H_{-})$ by $H_{+}$ in the bound \eqref{eq:bound_1} completes the proof of \eqref{eq:bound_2}.
\end{proof}

\setcounter{equation}{0}  \section{Acknowledgement}
Arthur Jaffe wishes to thank Daniel Loss for his warm hospitality at the University of Basel, Department of Physics, where much of this work was carried out, and for creating the stimulating atmosphere that made this work possible.  We are also grateful for comments by J\"urg Fr\"ohlich and by Stefano Chesi.
This work was supported by the Swiss NSF, NCCR QSIT, NCCR Nanoscience, and the Pauli Center ETHZ.


\begin{thebibliography}{100}
\bibitem{CJLP} Stefano Chesi,   Arthur Jaffe, Daniel Loss, and Fabio L. Pedrocchi, 
Vortex Loops and Majoranas, {\color{blue}\url{http://arxiv.org/abs/1305.6270}}.

\bibitem{DLS} Freeman J. Dyson, Elliott H. Lieb, and Barry Simon,
Phase Transitions in Quantum Spin Systems with Isotropic and Nonisotropic Interactions, {\em J. Stat. Phys.} {\bf 18}, 335--383 (1978).

\bibitem{FSS}  J\"urg Fr\"ohlich,  Barry Simon, and Thomas Spencer, Infrared Bounds, Phase Transitions, and Continuous Symmetry Breaking, {\em Commun. Math. Phys.} {\bf 50}, 79--85 (1976).

\bibitem{FILS}
J\"urg Fr\"ohlich, Robert Israel, Elliott H. Lieb, and Barry Simon, Phase Transitions and Reflection Positivity. I. General Theory and Long Range Lattice Models, \textit{Commun. Math. Phys.}, \textbf{62}, 1--34 (1978), and 
Phase Transitions and Reflection Positivity. II. Short Range Lattice Models and Coulomb Systems, \textit{J. Stat. Phys.}, \textbf{22}, 297--347 (1980).

\bibitem{GJ}  James Glimm and Arthur Jaffe, {\em Quantum Physics}, $2^{\rm nd}$ Edition, Springer Verlag, 1987.

\bibitem{GJS}
James Glimm, Arthur Jaffe, and Thomas Spencer, Phase Transitions for $\phi^{4}_{2}$ quantum fields, \textit{Commun. Math. Phys.}, \textbf{45}, 203--216 (1975). 

\bibitem{Lieb1994}
Elliott H. Lieb, Flux Phase of the Half-Filled Band, \textit{Phys. Rev. Lett.} \textbf{73}, 2158--2161 (1994).

\bibitem{Nachtergaele1996}
Nicolas Macris and Bruno Nachtergaele, On the Flux Phase Conjecture at Half-Filling: An Improved Proof, \textit{J. Stat. Phys.} \textbf{85}, 745--761 (1996).

\bibitem{OS1} Konrad Osterwalder and
 Robert Schrader, Axioms for Euclidean Green's functions, I and II.
\textit{Commun. Math. Phys.} {\bf 31},  83--112 (1973), and 
\textit{Commun. Math. Phys.}  {\bf 42},  281--305 (1975).
 
 \bibitem{OS3} K. Osterwalder and E. Seiler, Gauge Field Theories on a Lattice, \textit{Ann. Phys.} \textbf{110}, 440--471 (1978).

\end{thebibliography}
\end{document}